\definecolor{cof}{RGB}{219,144,71}
\definecolor{pur}{RGB}{186,146,162}
\definecolor{greeo}{RGB}{91,173,69}
\definecolor{greet}{RGB}{52,111,72}
\newcommand*{\rom}[1]{\expandafter\@slowromancap\romannumeral #1@}
\newcommand{\Mod}[1]{\ (\mathrm{mod}\ #1)}
\begin{document}
%
\title{A Distributed Scheduling Algorithm to Provide Quality-of-Service in Multihop Wireless Networks}

\author{\IEEEauthorblockN{Ashok Krishnan K.S. and Vinod Sharma}
\IEEEauthorblockA{Dept. of ECE, Indian Institute of Science, Bangalore, India\\
Email: \{ashok, vinod\}@ece.iisc.ernet.in
}}

\maketitle

\begin{abstract}
Control of multihop Wireless networks in a distributed manner while providing end-to-end delay requirements  for different flows, is a challenging problem. Using the notions of Draining Time and Discrete Review from the theory of fluid limits of queues, an algorithm that meets delay requirements  to various flows in a network is constructed. The algorithm involves an optimization which is implemented in a cyclic distributed manner across nodes by using the technique of iterative gradient ascent, with minimal information exchange between nodes. The algorithm uses time varying weights to give priority to flows. The  performance of the algorithm is studied in a network  with interference modelled by independent sets.
\end{abstract}

%
\IEEEpeerreviewmaketitle

\section{Introduction and Literature Review}
A multihop wireless network consists of nodes communicating data to each other through time varying, stochastic wireless channels. A network controller has to make decisions about the routing of flows, scheduling of links and the power control at each node, subject to various constraints. The controller may be a single node, making a centralized decision after accessing all the relevant information. Alternatively, the control process may be localized, with nodes making decisions about themselves in a distributed manner. While distributed algorithms are attractive from an implementation perspective, they may not always catch up with the centralized algorithms in terms of performance. In many general network scenarios, however, the centralized control problem may itself be intractable, and there is a need for suboptimal algorithms that are easily implementable \cite{Ji}.

Providing Quality-of-Service (QoS) is a central theme in network literature. Flows may demand different kinds of QoS depending on the application which generated them. Some applications may require a guarantee on the end-to-end mean delay, whereas others, such as live streaming, may require a \emph{hard} guarantee on the deadline. Some other applications may ask for a minimum bandwidth to be guaranteed at all times. In the large queue length regime, one approach to provide mean delays is to translate these requirements in terms of \emph{effective bandwidth} and \emph{effective delay} from Large Deviations theory, and obtain solutions in the physical layer; in \cite{she2016energy}, the authors use this technique for a K-user downlink scenario. Such techniques, however, cannot be applied easily in the multihop context, owing to the complex coupling between the queues, which makes it difficult to have a simplified one-to-one translation between delay requirements and control actions\cite{LauSurvey}.

Another approach to network control is to consider \emph{throughput optimal} algorithms \cite{Tassiu} which use backpressure. Such algorithms stabilize the network for any arrival rate inside the \emph{network capacity region}. However, backpressure based algorithms may not have good delay performance, especially under light loads \cite{Cui,sharma2007opportunistic}. In \cite{neely2009intelligent}, an algorithm is proposed that improves the energy-delay trade-off in a queue by intelligent dropping of packets.In \cite{stai2016performance} the authors use a weighted backpressure scheme to provide improvements in performance metrics such as delay, over traditional backpressure systems. Using Markov Decision Processes (MDPs) \cite{puterman2014markov} has been another approach to provide QoS, in the single hop as well as multihop context \cite{Kumar}. In general settings, however, MDPs are not easy to handle owing to the huge size of the state space. Control based on Lyapunov Optimization is quite popular in the multihop network setting \cite{GeorgBook}. In \cite{SatVS} the authors study the problem of minimizing power while simultaneously providing hard deadline guarantees in a wireless network. In \cite{ashok2016distributed} the authors devise a randomized algorithm which provides targeted mean delay and hard deadline for flows in a multihop setting under the SINR model.

One way to construct tractable models of networks is to use the notion of fluid limits. A comprehensive treatment of the theory and techniques used in fluid control is available in \cite{Meyn08}. The idea is to establish a suitable scaling under which the network converges to a simpler, deterministic fluid network, in which the various processes can be modelled by systems of ordinary differential equations (ODEs). One then obtains control policies at the level of these ODEs, and translates them to the actual network setting. A number of results are available that relate the performance of the fluid control to the control of the actual stochastic system \cite{Meyn03}. A continuous control policy can be built from the fluid model by means of the technique of discrete review \cite{Bau02,Marg00}. Here, the network is reviewed at certain points in time, and control decisions, as well as the next review instant, are calculated based on the system state at the review instants. A scheduling algorithm based on draining time was proposed in \cite{SubLei}. The authors consider a network without interference constraints, and are able to obtain stability results for simple models. They also note that obtaining stability results in the general case may be difficult. In \cite{Bert15}, the authors develop a robust fluid model, which adds stochastic variability to the deterministic fluid process, and develop a polynomial time algorithm to solve the network control problem.
 
Incremental Gradient methods \cite{Ber10} have been used in neural networks and other areas, especially for implementing distributed optimization. Here, in order to optimize a separable sum of functions, the gradient of each constituent function is taken iteratively, instead of calculating the gradient of the sum function. This leads to a separable iterative process that leads to the optimal point in the limit.
Our main contributions in this paper are summarised below.
\begin{itemize}
	\item We propose an optimization problem  motivated by the draining time of the fluid model associated with the network. A draining time based scheduling algorithm is considered in \cite{SubLei}, but there is no interference, and the function being optimized is  different. In \cite{Kumar} the authors provide end to end hard deadline guarantees by solving the dual problem of an appropriate MDP. However, their model has unreliable links but no interference constraints.
	\item The control variables that appear in the optimization are given time varying weights, in order to give priority to flows whose delay requirements have not been met.
	\item The optimization is to be solved at review instants, and the control variables obtained at the beginning of a review instant are used till the next review instant. This makes it less computationally intensive than algorithms that require computations to be done at each time slot, such as in \cite{stai2016performance} or \cite{ashok2016distributed}; \cite{ashok2016distributed} also uses a different interference model. Discrete review is used in works such as \cite{Marg00}, but for open queuing networks; moreover, the implementation is centralized and they do not consider delay deadlines.
	\item We use iterative gradient ascent to obtain a distributed algorithm in order to solve the optimization problem. This can be implemented easily in a cyclic manner, with message passing between the nodes after each step. We also show how the projection step involved in the optimization can be done by messaging between links that share a node.
	\item   The algorithm only requires whether the  delay constraint has been met at the flow destination at each review instant, apart from the local state information. We do not need to compute metrics over paths, as in \cite{SatVS}. Also, the algorithm in \cite{SatVS} is not fully distributed. 
\end{itemize} 
The rest of this paper is organized as follows. In Section II, we describe the system model, and provide the corresponding Fluid Model, as well as formulate the optimization problem for our QoS problem. In Section III we develop a distributed algorithm to solve this problem. We provide the simulation results in Section IV, followed by the conclusion in Section V.

\section{System Model}

We consider a multihop network (Fig. \ref{fig1}), given by a graph $G=(V,E)$ where $V=\{1,2,..,N\}$ is the set of vertices and $E$, the set of links on $V$. We assume a slotted system, with the discrete time index denoted by $t\in \{0,1,2,...\}$. We have directional links, with  link $(i,j)$ from node $i$ to node $j$ having a time varying channel gain $\gamma_{ij}(t)$ at time $t$. At each node $i$, $A_i^c(t)$ denotes the i.i.d process of exogenous arrival of packets destined to node $c$, with mean arrival rate  $\lambda_i^c=\mathbb{E}[A_i^c(t)]$. All traffic in the network with the same destination $c$ is called  \emph{flow} $c$; the set of all flows is denoted by $F$. Each flow has a fixed set of routes to follow to its destination. At each node there are queues, with $Q_i^c(t)$ denoting the queue length at node $i$ corresponding to flow $c \in F$. The queues evolve as
\begin{align}
Q_i^f(t)= Q_i^f(0)+\sum_{s=0}^{t-1}(A_i^f(s)+\sum_{k\neq i}S_{ki}^f(s)-\sum_{j \neq i}S_{ij}^f(s)) \label{actualQueue}
\end{align}
where $S_{mn}^f(s)$ denotes the number of packets of flow $f$ that are transmitted from node $m$ to node $n$ in time slot $s$.
 Each node transmits at a fixed power $p$. The rate of transmission between node $i$ and node $j$ is $\mu_{ij}(t)=f(p,\gamma(t))$ where $f$ is some achievable rate function.
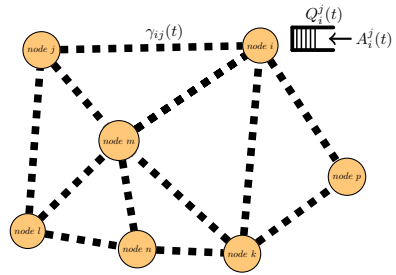
\begin{figure}
	\centering
	\setlength{\unitlength}{1cm}
	\thicklines
	\begin{tikzpicture}[scale=0.6, transform shape]		
	\node[draw,shape=circle, fill={rgb:orange,1;yellow,2;pink,5}, scale=0.6, transform shape] (v1) at (4.7,0.5) {$node\ k$};
	\node[draw,shape=circle, fill={rgb:orange,1;yellow,2;pink,5}, scale=0.6, transform shape] (v2) at (2.4,0.6) {$node\ n$};
	\node[draw,shape=circle, fill={rgb:orange,1;yellow,2;pink,5}, scale=0.6, transform shape] (v3) at (2.0,3.0) {$node\ m$};
	\node[draw,shape=circle, fill={rgb:orange,1;yellow,2;pink,5}, scale=0.6, transform shape] (v4) at (0.3,5) {$node\ j$};
	\node[draw,shape=circle, fill={rgb:orange,1;yellow,2;pink,5}, scale=0.6, transform shape] (v5) at (5.1,5.1) {$node\ i$};
	\node[draw,shape=circle, fill={rgb:orange,1;yellow,2;pink,5}, scale=0.6, transform shape] (v6) at (7,2.2) {$node\ p$};
	\node[draw,shape=circle, fill={rgb:orange,1;yellow,2;pink,5}, scale=0.6, transform shape] (v10) at (0,1.0) {$node\ l$};
	\node (v7) at (7.6,5.25) {$A_i^j(t)$};
	\node (v8) at (6.5,5.8) {$Q_i^j(t)$};
	\node (v9) at (3,5.4) {$\gamma_{ij}(t)$};			
	\draw[line width=1mm, dashed] (v2) -- (v1)
	(v4) -- (v5)			
	(v3) -- (v5)
	(v3) -- (v2)
	(v1) -- (v5)
	(v1) -- (v3)
	(v4) -- (v3)
	(v3) -- (v5)
	(v5) -- (v6)
	(v10) -- (v2)
	(v10) -- (v4)
	(v10) -- (v3)
	(v1) -- (v6);
	\draw[line width=0.5mm, line cap=round](5.8,5)--(6.7,5);
	\draw[line width=0.5mm, line cap=round](5.8,5.5)--(6.7,5.5);
	\draw[line width=0.5mm, line cap=round](5.8,5)--(5.8,5.5);
	\draw[line width=0.2mm](5.9,5)--(5.9,5.5);
	\draw[line width=0.2mm](6.0,5)--(6.0,5.5);
	\draw[line width=0.2mm](6.1,5)--(6.1,5.5);
	\draw[line width=0.2mm](6.2,5)--(6.2,5.5);
	\draw[line width=0.2mm](6.3,5)--(6.3,5.5);
	\draw[thick,->] (7.1,5.25) -- (6.6,5.25);
	\end{tikzpicture}
	\caption{A simplified depiction of a Wireless Multihop Network}
	\label{fig1}
\end{figure}
We will assume that the links are sorted into $M$ \emph{interference sets}\ $I_1, I_2, \dotsc, I_M$. At any time, only one link from an interference set can be active. A link may belong to multiple interference sets. In this work we will assume that any two links which share a common node will fall in the same interference set. 
\subsection{Fluid Model}
The \emph{fluid model} is an ODE approximation to the evolution of the queue. Under appropriate scaling, the queue evolution in equation (\ref{actualQueue}) converges\cite{Meyn08} to the \emph{fluid equation} given by
\begin{align*}
	\frac{d^+}{dt}q_i^f(t;x_i^f) = \lambda_i^f +\sum_{k\neq i} \zeta_{ki}^f(t)\mu_{ki}-\sum_{j\neq i} \zeta_{ij}^f(t)\mu_{ij}
\end{align*}
where $\frac{d^+}{dt}$ denotes the \emph{right derivative}, which is the derivative limit taken from the right, and  $q_i^f(0;x_i^f)=x_i^f$ is the initial condition, and $\zeta_{ij}^f$ corresponds to a fluid flow from node $i$ to node $j$, of flow $f$. This models the queue as a continuous time, deterministic process with continuous arrival and departure processes.  For the fluid model, the \emph{draining time} of a queue is defined to be the time to empty that queue. Consider the evolution of a fluid queue which has initial state $x$ and has  arrivals at rate $\lambda$ and service at rate $\mu(>\lambda)$, given by
\begin{align*}
q(t;x)=\max(x-(\mu-\lambda)t,0).
\end{align*}
The draining time in this case is given by
\begin{align*}
\tau=\frac{x}{\mu-\lambda}.
\end{align*}
 Draining time captures in some sense the delay associated with a flow. It is the time that an arrival at time $t=0$ would have to wait before it gets served, assuming first-in-first-out service discipline. In a multihop network, the draining time will be given by the smallest $\tau$ that solves the equation
 \begin{align*}
 x_i^f+\lambda_i^f\tau +\sum_{k\neq i}\mu_{ki}\int_0^{\tau} \zeta_{ki}^f(t)dt-\sum_{j\neq i}\mu_{ij}\int_0^{\tau} \zeta_{ij}^f(t)dt=0.
 \end{align*}
 Assuming $\zeta_{ij}^f(t)=\zeta_{ij}^f$ for all $t$, we obtain
 \begin{align*}
 \tau=\frac{x_i^f}{\sum_{j\neq i}\mu_{ij}\zeta_{ij}^f-\lambda_i^f-\sum_{k\neq i}\mu_{ki}\zeta_{ki}^f}.
 \end{align*}
 Calculating this requires knowledge of arrival rates as well as the scheduling decisions of other nodes. Hence, this term is not easy to calculate locally at a queue $q_i^f$. Therefore, we define a \emph{pseudo draining time}, given by
\begin{align*}
	D_i^f=\frac{x_i^f}{\sum_{j\neq i}\mu_{ij}\zeta_{ij}^f}.
\end{align*}
This $D_i^f$ is a lower bound to $\tau$. The draining time of the queue $q_i^f$ would have been $D_i^f$ if the queue had no inflow, and was serving at constant rate $\sum_{j\neq i}\mu_{ij}\zeta_{ij}^f$.  Consider the optimization
\begin{align*}
	\max\sum_{i,f}\frac{w_i^f}{D_i^f},
\end{align*}
where $w_i^f$ is a weight corresponding to flow $f$ on node $i$. Choosing $w_i^f=\theta^f(x_i^f)^2$, we obtain the problem
\begin{align*}
	\max\sum_{i,f}\theta^f x_i^f\sum_{j\neq i}\zeta_{ij}^f\mu_{ij}.
\end{align*}
Defining $\mu_{ii}=0$ for all $i$, we can rewrite this as
\begin{gather}
	\max\sum_{i,j,f}\theta^f x_i^f\zeta_{ij}^f\mu_{ij}, \label{optFun}\\
	s.t\ 0\leq\zeta_{ij}:=\sum_{f\in F}\zeta_{ij}^f\leq 1\ \forall ij, \label{optCon1}\\
	0\leq\zeta_{ij}+\zeta_{kl}\leq 1,\ \forall (i,j),(k,l)\in I_m, \forall m. \label{optCon2}
\end{gather}
where the first constraint corresponds to the fact that only one flow can be scheduled across a link, and the second constraint corresponds to interference constraints on the links. This is also the standard weighted-rate maximization problem \cite{WeeraBook}, with the weight given to rate $\mu_{ij}$ being $\sum_{f}\theta^f x_i^f\zeta_{ij}^f$.

We will use the technique of discrete review (See \cite{Marg00}, \cite{Meyn08} for discussions).  This involves an increasing sequence of times $0\leq T_1<T_2\dotsc$, at which we make control decisions for the network, based on the optimization problem (\ref{optFun})-(\ref{optCon2}). We solve the fluid problem, and obtain scheduling variables corresponding to those fluid variables, at every review instant.  In the time frame $[T_i,T_{i+1}]$, we will assume that the channel gain is fixed (slow-fading), but drawn as an \emph{i.i.d} sequence from a distribution. Each node transmits at power $P$. The rate $\mu_{ij}=\log(1+\frac{\gamma_{ij}P}{\sigma^2})$. Consider a packet of flow $f$ which arrives at node $i$ at the beginning of a review period. Such a packet observes a backlog of $x_i^f$ in its queue. The total service allocated to flow $f$ over link $(i,j)$ in that period is $\zeta_{ij}^f(T_{i+1}-T_i)$ where $\zeta_{ij}^f\leq1$.
The times are chosen as $T_{i+1}-T_i=a_1\log (1+a_2\sum_{i,f}Q_i^f(T_i))$. We will also be using \emph{safety stocks} \cite{Meyn08} to ensure that there is no starvation of resources in the network. These are thresholds below which each queue is not allowed to go.
\subsection{Providing Quality-of-Service}
We will be solving the optimization problem defined by equations (\ref{optFun})-(\ref{optCon2}) at every discrete review instant. In order to take care of QoS constraints, we will let $\theta^f$ vary dynamically. At a review instant, and flow $f_1$ requires its mean delay to be less than or equal to $d_1$. At the destination node of $f_1$, we estimate empirically the mean delay of flow $f_1$ up to the last review period, and if it is greater than $d_1$, we set $\theta^f=\hat{\theta}>1$. Otherwise, $\theta^f=1$. Thus the control variables corresponding to the flows that require QoS obtain higher weight in the optimization problem. For a hard deadline guarantee flow, we have two parameters, the hard deadline and the drop ratio, which is the percentage of packets we are willing to allow with delays larger than the hard deadline. At every review instant, at the destination of that flow, we check whether the percentage of packets that have arrived with delays larger than the deadline, exceeds the drop ratio. If yes, we set the corresponding $\theta^f=\hat{\theta}$. 
In the next section we will provide a distributed algorithm for the optimization problem defined by equations (\ref{optFun})-(\ref{optCon2}).
\section{Distributed Optimization}
Define the set of all \emph{link-flow} pairs $((i,j),f)$ as $\mathcal{K}$. For any $k\in\mathcal{K}$, there exists a link $(i(k),j(k))$ and a flow $f(k)$. A \emph{schedule} is a vector $\textbf{s}$ of length $|\mathcal{K}|$, with each element $\textbf{s}(k)$ corresponding to the fraction of time that link $(i(k),j(k))$ transmits flow $f(k)$. Define the \emph{feasible set} $\mathcal{S}$ to be the set of schedules that satisfy constraints (\ref{optCon1}) and (\ref{optCon2}); however, we remove the positivity constraint. Note that this changes the search space, but does not change the optimal value or the optimal point, since the quantity being maximized is a weighted sum of $\zeta_{ij}^f$ with positive weights. The set $\mathcal{S}$ will be a convex polytope, owing to the fact that it is generated by linear inequalities, and will be a closed subset of $\mathbb{R}^{|\mathcal{K}|}$.

We can rewrite  equations (\ref{optFun}) through (\ref{optCon2}) as
\begin{align}
\max_{\textbf{s}\in\mathcal{S}}\sum_{k\in\mathcal{K}}f_k(\textbf{s}) \label{DiffOpt}
\end{align}
with $f_k(\textbf{s})=w_k\mu_k\textbf{s}(k)$, $w_k=\theta^{f(k)} x_{i(k)}^{f(k)}$, $\mu_k=\mu_{i(k)j(k)}$ and $\textbf{s}(k)=\zeta_{i(k)j(k)}^{f(k)}$. 
\subsection{Incremental Gradient Ascent}
In order to optimize (\ref{DiffOpt}), we will use the incremental gradient method \cite{Ber10}. This involves the iteration
\begin{align*}
\textbf{s}_{j+1}=\Pi_{\mathcal{S}}(\textbf{s}_j+\alpha_j\nabla f_{k_j}(\textbf{s}_j)),
\end{align*}
with $k_j=j\ \text{modulo}\ |\mathcal{K}|+1$, and $\Pi_{\mathcal{S}}$ denotes projection onto the  set $\mathcal{S}$. Let $\textbf{v}(r)$  denote a vector which is one only at its $r$th index and zero elsewhere. We can write
\begin{align*}
\nabla f_{k_j}(\textbf{s}_j)=w_{k_j}\mu_{k_j}\textbf{v}(k_j).
\end{align*}
Hence we may write the gradient ascent equation as
\begin{align}
\textbf{s}^{'}_{j}=\textbf{s}_j+\alpha_j w_{k_j}\mu_{k_j}\textbf{v}(k_j). \label{DistriEq}
\end{align}
Following this, we do the projection, to obtain the new point
\begin{align}
\textbf{s}_{j+1}=\Pi_{\mathcal{S}}(\textbf{s}^{'}_{j}). \label{DistriEqP}
\end{align}
\subsection{Projection}
 Since  interference exists between any two links that share a common node, any update of the optimization variables at a link will affect only those links which share a node with it. The constraint set $\mathcal{S}$ is a  polytope, which is defined by the intersection of half-spaces $\{\mathcal{H}_i\}_{i=1}^M$, where each half-space $\mathcal{H}_i$ is characterized by an equation of the form
\begin{align*}
\langle \textbf{s},\boldsymbol{\nu}^i\rangle\leq \beta_i.
\end{align*}
 where $\boldsymbol{\nu}^i$ is the unit normal vector to the plane, with $||\boldsymbol{\nu}^i||_2=1$. For example, the interference constraint
 \begin{align*}
 s_1+s_2+s_4\leq 1
 \end{align*}
 can be represented by
 \begin{align*}
 \langle \textbf{s},\boldsymbol{\nu}^j\rangle\leq \beta_j,
 \end{align*}
 where $\boldsymbol{\nu}^j=\frac{1}{\sqrt{3}}\sum_{n=1,2,4}\textbf{v}(n)$, and $\beta_j=\frac{1}{\sqrt{3}}$.  Due to the nature of our constraints, $\boldsymbol{\nu}^i$ will have only non-negative components. Each of these half-spaces corresponds to exactly one interference constraint.
\begin{figure}
	\begin{subfigure}[b]{0.24\textwidth}
		\centering
		\resizebox{\linewidth}{!}{
			\begin{tikzpicture}
			\draw (0,2) -- (3,2);
			\draw[dashed] (3,2) -- (5,2);
			\draw (3,2) -- (5,0);
			\draw (0,-1) -- (5,0);
			\draw[dashed] (6,-1) -- (5,0);
			\draw[dashed] (3,2) -- (2,3);
			\draw[dashed] (5,0) -- (7,0.4);
			\fill[gray!20,opacity=0.6] (0,2) -- (3,2) -- (5,0) -- (0,-1) -- cycle;
			\node (v1) at (1.8,0.8) {$\mathcal{S}$};
			\node (v2) at (6.3,-1) {$\mathcal{H}_v$};
			\node (v3) at (5.3,2) {$\mathcal{H}_u$};
			\node (v4) at (7.3,0.4) {$\mathcal{H}_w$};
			\node (v5) at (1.6,2.5) {$A$};
			\node (v6) at (3.6,2.5) {$B$};
			\node (v6) at (5.0,1.2) {$C$};
			\node (v7) at (5.9,-0.3) {$D$};
			\node (v8) at (4.4,-0.7) {$E$};
			\node (v9) at (3.8,0.6) {$\textbf{s}$};
			\node (v10) at (5.8,1.4) {$\textbf{s}^{'}$};
			\node (v11) at (4.7,0.3) {$\textbf{s}^{''}$};
			\draw[dotted] (3.8,0.3) -- (5.8,1.0);
			\draw[dotted] (5.8,1.0) -- (4.9,0.1);
			\foreach \Point in {(3.8,0.3),(5.8,1.0),(4.9,0.1)}{
				\node at \Point {\textbullet};
			}
			\end{tikzpicture}
		}
		\caption{Single Step Projection}
		\label{figProj}
	\end{subfigure}
	\begin{subfigure}[b]{0.24\textwidth}
		\centering
		\resizebox{\linewidth}{!}{
			\begin{tikzpicture}
			\draw (0,2) -- (3,2);
			\draw[dashed] (3,2) -- (5,2);
			\draw (3,2) -- (5,0);
			\draw (0,-1) -- (5,0);
			\draw[dashed] (6,-1) -- (5,0);
			\draw[dashed] (3,2) -- (2,3);
			\draw[dashed] (5,0) -- (7,0.4);
			\fill[gray!20,opacity=0.6] (0,2) -- (3,2) -- (5,0) -- (0,-1) -- cycle;
			\node (v1) at (1.8,0.8) {$\mathcal{S}$};
			\node (v2) at (6.3,-1.4) {$\mathcal{H}_v$};
			\node (v3) at (5.3,2) {$\mathcal{H}_u$};
			\node (v4) at (7.3,0.4) {$\mathcal{H}_w$};
			\node (v5) at (1.6,2.5) {$A$};
			\node (v6) at (3.6,2.5) {$B$};
			\node (v6) at (5.0,1.2) {$C$};
			\node (v7) at (7.2,-0.3) {$D$};
			\node (v8) at (4.0,-0.7) {$E$};
			\node (v10) at (6.6,-0.1) {$\textbf{s}^{'}$};
			\node (v10) at (5.5,-1.0) {$\textbf{s}_1$};
			\node (v10) at (5.5,0.5) {$\textbf{s}_2$};
			\node (v10) at (4.8,-0.35) {$\textbf{s}_3$};
			\draw[dotted] (6.3,-0.45) -- (5.8,-0.8);
			\draw[dotted] (5.8,-0.8) -- (5.5,0.1);
			\draw[dotted] (5.5,0.1) -- (5.2,-0.25);
			\foreach \Point in {(6.3,-0.45),(5.8,-0.8),(5.5,0.1),(5.2,-0.25)}{
				\node at \Point {\textbullet};
			}
			\end{tikzpicture}
		}
		\caption{Iterative Projection}   
		\label{figProj2}
	\end{subfigure}
	\caption{Projection} 
	\label{fig:twoSubs}
\end{figure}
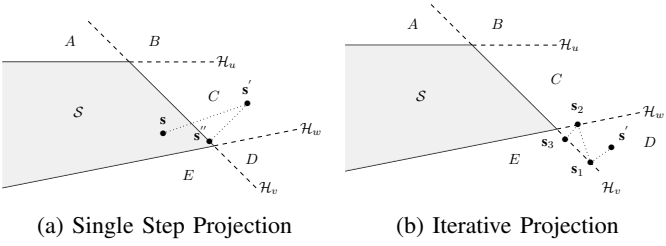\

In  step (\ref{DistriEq}), we update one component $\textbf{s}(k)$ of  $\textbf{s}$, which corresponds to a link flow pair $(i(k),j(k)),f(k)$. There are two half-space constraints, corresponding to links connected to $i(k)$ and  $j(k)$; let these be $\mathcal{H}_v$ and $\mathcal{H}_w$.  Fig. \ref{figProj} shows the case in which  equation (\ref{DistriEq}) violates $\mathcal{H}_v$ but not $\mathcal{H}_w$, moving from  $\mathcal{S}$ to  $C$. The  perpendicular projection  on the boundary of $\mathcal{H}_v$ is $\textbf{s}^{''}$. After (\ref{DistriEq}), if the point  moves to  $D$, projection can be done repeatedly, first on  $\mathcal{H}_v$ and then on $\mathcal{H}_w$, and so on. It can be shown \cite[Theorem 13.7]{Neumann} that this iterative projection  converges to the projection of the point onto  $\mathcal{H}_v\cap\mathcal{H}_w$. Three steps of this iterative process are depicted in Fig. \ref{figProj2}. We will now obtain the analytical expressions for projecting a point onto a hyperplane. Let $\mathcal{H}_v$
 be defined by
 \begin{align*}
\langle \textbf{s},\boldsymbol{\nu}^v\rangle\leq \beta_v.
 \end{align*}
 Let the point $\textbf{s}^{'}$ be such that
\begin{align*}
\beta_v^*\triangleq\langle\textbf{s}^{'},\boldsymbol{\nu}^v\rangle >\beta_v.
\end{align*}
Hence it lies outside $\mathcal{S}$. Let us define
\begin{align}
\textbf{s}^{''}=\textbf{s}^{'}-(\beta_v^*-\beta_v)\boldsymbol{\nu}^v. \label{ProjStepEqn}
\end{align}
Observe that $\textbf{s}^{''}$ is the orthogonal projection of $\textbf{s}^{'}$ onto $\mathcal{H}_v$.
\begin{align*}
			\langle \textbf{s}^{''}, \boldsymbol{\nu}^v \rangle &= \langle \textbf{s}^{'}, \boldsymbol{\nu}^v \rangle -(\beta_v^*-\beta_v)\langle \boldsymbol{\nu}^v,\boldsymbol{\nu}^v \rangle\\
 &= \beta_v^*  -(\beta_v^*-\beta_v)=\beta_v.
\end{align*} 
Since 	$\textbf{s}^{'}-\textbf{s}^{''}=(\beta_v^*-\beta_v)\boldsymbol{\nu}^v$, and  $\boldsymbol{\nu}^v$ is normal to the plane boundary of $\mathcal{H}_v$, it follows that the projection step (\ref{ProjStepEqn}) projects the point perpendicularly onto $\mathcal{H}_v$. We show below that the projection does not break any additional hyperplane constraints.
\newtheorem{prop}{Proposition}
\begin{prop}
	If $\langle\textbf{s}^{'},\boldsymbol{\nu}^w \rangle\leq \beta_w$, then  $\langle\textbf{s}^{''},\boldsymbol{\nu}^w \rangle \leq \beta_w$.
\end{prop}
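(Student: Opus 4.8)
The plan is to expand $\langle \textbf{s}'', \boldsymbol{\nu}^w \rangle$ using the defining relation (\ref{ProjStepEqn}) for $\textbf{s}''$ together with the bilinearity of the inner product. Substituting $\textbf{s}''=\textbf{s}'-(\beta_v^*-\beta_v)\boldsymbol{\nu}^v$ yields
\begin{align*}
\langle \textbf{s}'', \boldsymbol{\nu}^w \rangle = \langle \textbf{s}', \boldsymbol{\nu}^w \rangle - (\beta_v^*-\beta_v)\langle \boldsymbol{\nu}^v, \boldsymbol{\nu}^w \rangle.
\end{align*}
This reduces the claim to showing that the subtracted term is non-negative, so that the right-hand side is bounded above by $\langle \textbf{s}', \boldsymbol{\nu}^w \rangle$, which is at most $\beta_w$ by hypothesis.

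Next I would check the sign of each factor separately. The scalar $\beta_v^*-\beta_v$ is strictly positive, since by assumption $\textbf{s}'$ violates $\mathcal{H}_v$, i.e.\ $\beta_v^*=\langle \textbf{s}',\boldsymbol{\nu}^v\rangle>\beta_v$. The inner product $\langle \boldsymbol{\nu}^v, \boldsymbol{\nu}^w\rangle$ is non-negative because, as noted in the construction of the half-spaces, every constraint normal $\boldsymbol{\nu}^i$ arising from our interference sets has only non-negative components; hence $\langle \boldsymbol{\nu}^v, \boldsymbol{\nu}^w\rangle$ is a sum of products of non-negative numbers. Therefore $(\beta_v^*-\beta_v)\langle \boldsymbol{\nu}^v, \boldsymbol{\nu}^w\rangle\geq 0$.

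Combining these observations gives
\begin{align*}
\langle \textbf{s}'', \boldsymbol{\nu}^w \rangle \leq \langle \textbf{s}', \boldsymbol{\nu}^w \rangle \leq \beta_w,
\end{align*}
which is the desired conclusion. The only genuinely load-bearing ingredient is the non-negativity of $\langle \boldsymbol{\nu}^v, \boldsymbol{\nu}^w\rangle$; everything else is routine bilinear algebra. So the step I would be most careful about is the explicit appeal to the structural fact that the constraint normals have non-negative entries, since this is precisely what guarantees that projecting out the violated constraint $\mathcal{H}_v$ moves the point in a direction that can only decrease (or leave unchanged) its $\boldsymbol{\nu}^w$-coordinate, and hence cannot manufacture a new violation of $\mathcal{H}_w$. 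I would also remark that no geometric or positivity assumption on $\textbf{s}'$ itself is needed beyond the two stated inequalities, which is what makes the iterative projection of Fig.~\ref{figProj2} well behaved.
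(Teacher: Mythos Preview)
Your proof is correct and follows exactly the same approach as the paper: expand $\langle\textbf{s}'',\boldsymbol{\nu}^w\rangle$ via (\ref{ProjStepEqn}), then use $\beta_v^*>\beta_v$ together with the non-negativity of the components of $\boldsymbol{\nu}^v$ and $\boldsymbol{\nu}^w$ to conclude that the subtracted term is non-negative. Your additional commentary about which ingredient is load-bearing is accurate but goes beyond what the paper records.
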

\begin{proof}
	\begin{align*}
	\langle\textbf{s}^{''},\boldsymbol{\nu}^w \rangle &= \langle\textbf{s}^{'},\boldsymbol{\nu}^w\rangle -(\beta_v^*-\beta_v)\langle \boldsymbol{\nu}^v, \boldsymbol{\nu}^w\rangle.
	\end{align*}
	Since  $\boldsymbol{\nu}^w$ and $\boldsymbol{\nu}^v$ are non-negative, and $\beta_v^*>\beta_v$, we have $(\beta_v^*-\beta_v)\langle \boldsymbol{\nu}^v, \boldsymbol{\nu}^w\rangle\geq  0$, and consequently, $\langle\textbf{s}^{''},\boldsymbol{\nu}^w \rangle\leq \beta_w$.
\end{proof}	
Hence, if a point breaks just one hyperplane constraint and no other, the projection step (\ref{ProjStepEqn}) projects the point back on $\mathcal{S}$.

Consider an example. If the interference constraint is
\begin{align*}
s_1+s_2+s_4\leq 1,
\end{align*}
the projection step (\ref{ProjStepEqn}) is equivalent to 
\begin{align*}
s_1=s_1-\frac{s-1}{3},\ s_2=s_2-\frac{s-1}{3},\ s_4=s_4-\frac{s-1}{3}.
\end{align*}
where $s=s_1+s_2+s_4$.
\subsection{Convergence}
Let us define
\begin{align*}
f(\textbf{s}):=\sum_{k\in\mathcal{K}}f_k(\textbf{s}), \ f^*:=\max_{\textbf{s}\in\mathcal{S}}\sum_{k\in\mathcal{K}}f_k(\textbf{s}).
\end{align*}
We have the following theorem for the convergence of the distributed algorithm.
\newtheorem{theorem}{Theorem}
\begin{theorem}
	If $\max_{i,j,f}\theta^f\mu_{ij}\leq c_2$, 
	the algorithm defined by  the update equation (\ref{DistriEq}) followed by the projection (\ref{DistriEqP}) results in a sequence of points $\{\textbf{s}_n\}$ such that
	\begin{align*}
	\lim_{j\to\infty}\sup f(\textbf{s}_j) \geq f^*-c_3,
	\end{align*}
	where $c_3=\frac{\alpha\beta |\mathcal{K}|^2c_2^2}{2}$ with $\beta=4+\frac{1}{|\mathcal{K}|}$.
\end{theorem}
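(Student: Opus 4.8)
Our plan is to follow the classical convergence analysis of the cyclic incremental gradient method \cite{Ber10}, adapted to exploit two special features of (\ref{DiffOpt}): each component $f_k$ is \emph{linear}, so its gradient $\nabla f_k=w_k\mu_k\textbf{v}(k)$ is a constant vector, and the projection $\Pi_{\mathcal{S}}$ is nonexpansive because $\mathcal{S}$ is closed and convex. Write $L:=|\mathcal{K}|$, fix an optimizer $\textbf{s}^*\in\mathcal{S}$ with $f(\textbf{s}^*)=f^*$, and observe that the hypothesis bounds every incremental gradient, since $\|\nabla f_{k}\|=w_{k}\mu_{k}=\theta^{f(k)}x_{i(k)}^{f(k)}\mu_{k}\le c_2$. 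The first step is a one-step inequality: because $\textbf{s}^*\in\mathcal{S}$ and projecting onto a convex set cannot increase the distance to a point already in it,
\begin{align*}
\|\textbf{s}_{j+1}-\textbf{s}^*\|^2\le\|\textbf{s}_j+\alpha\nabla f_{k_j}-\textbf{s}^*\|^2
=\|\textbf{s}_j-\textbf{s}^*\|^2+2\alpha\langle\nabla f_{k_j},\textbf{s}_j-\textbf{s}^*\rangle+\alpha^2\|\nabla f_{k_j}\|^2 .
\end{align*}

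The second step groups the iterations into cycles of length $L$. Because $k_j=j\bmod L+1$, over one cycle $\textbf{s}_{nL}\to\textbf{s}_{(n+1)L}$ the index $k_j$ runs exactly once through $\mathcal{K}$. Using constancy of each $\nabla f_k$, I would split the inner product as $\langle\nabla f_{k_j},\textbf{s}_j-\textbf{s}^*\rangle=\langle\nabla f_{k_j},\textbf{s}_{nL}-\textbf{s}^*\rangle+\langle\nabla f_{k_j},\textbf{s}_j-\textbf{s}_{nL}\rangle$. Summing the first term over the cycle collapses to the full gradient, and by linearity $\langle\nabla f,\textbf{s}_{nL}-\textbf{s}^*\rangle=f(\textbf{s}_{nL})-f^*$. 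The remaining pieces, namely the drift terms $\langle\nabla f_{k_j},\textbf{s}_j-\textbf{s}_{nL}\rangle$ and the curvature terms $\alpha^2\|\nabla f_{k_j}\|^2$, must then be bounded by a constant.

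The crux, and the main obstacle, is controlling the drift, which is precisely the price of evaluating each component gradient at the running iterate $\textbf{s}_j$ rather than at the common point $\textbf{s}_{nL}$. Here nonexpansiveness is used again: each step displaces the iterate by at most $\|\textbf{s}_{j+1}-\textbf{s}_j\|\le\alpha\|\nabla f_{k_j}\|\le\alpha c_2$, so $\|\textbf{s}_j-\textbf{s}_{nL}\|\le(j-nL)\alpha c_2$ throughout the cycle, and hence $|\langle\nabla f_{k_j},\textbf{s}_j-\textbf{s}_{nL}\rangle|\le(j-nL)\alpha c_2^2$ by Cauchy--Schwarz. Summing the drift and curvature contributions over the $L$ steps yields an aggregate error of the form $\alpha^2 c_2^2\beta L^2$ with $\beta=4+\tfrac{1}{L}$; the $L^2$ scaling is inherited entirely from the accumulated displacement, while the constant in $\beta$ reflects the (deliberately loose) bound on the per-step drift. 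This produces the per-cycle recursion
\begin{align*}
\|\textbf{s}_{(n+1)L}-\textbf{s}^*\|^2\le\|\textbf{s}_{nL}-\textbf{s}^*\|^2+2\alpha\big(f(\textbf{s}_{nL})-f^*+c_3\big),\qquad c_3=\tfrac{\alpha\beta L^2 c_2^2}{2}.
\end{align*}

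Finally, I would close by a standard telescoping argument. If $\limsup_n f(\textbf{s}_{nL})<f^*-c_3$, then $f(\textbf{s}_{nL})-f^*+c_3\le-\epsilon$ for some $\epsilon>0$ and all large $n$, so the recursion forces $\|\textbf{s}_{nL}-\textbf{s}^*\|^2$ to decrease by at least $2\alpha\epsilon$ each cycle and eventually turn negative, which is impossible. Hence $\limsup_n f(\textbf{s}_{nL})\ge f^*-c_3$, and since the cycle-start iterates form a subsequence of $\{\textbf{s}_j\}$, the claim $\limsup_{j\to\infty}f(\textbf{s}_j)\ge f^*-c_3$ follows.
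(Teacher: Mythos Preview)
Your approach is exactly that of the paper, whose entire proof consists of a reference to \cite{Ber10}; you have simply written out the standard convergence analysis of the projected cyclic incremental gradient method with constant stepsize that is cited there. One small caveat: the inequality $\|\nabla f_k\|=w_k\mu_k=\theta^{f(k)}x_{i(k)}^{f(k)}\mu_k\le c_2$ does not actually follow from the hypothesis $\max_{i,j,f}\theta^f\mu_{ij}\le c_2$ as stated, because $w_k$ carries the queue-length factor $x_{i(k)}^{f(k)}$ --- but this imprecision is inherited from the paper's own theorem statement rather than introduced by your argument, and the rest of your analysis only uses that $\|\nabla f_k\|$ is uniformly bounded by some constant.
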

\begin{proof}
	See \cite{Ber10}.
\end{proof}
 We describe the algorithm below.
\subsection{Algorithm Description}
The algorithm proceeds in review cycles. At every slot $t$ that is the beginning of a review cycle, the nodes calculate the number of slots till the next review slot by 
\begin{align*}
T_{rev}=t+a_1\log (1+a_2\sum_{i,f}Q_i^f(t)),
\end{align*}
where $a_1$ and $a_2$ are constants. At the beginning of a review cycle, the nodes calculate the variables $\zeta_{ij}^f$  for all $i$, $j$ and $f$, and  use these till the end of the review cycle.  We will now describe how the $\zeta_{ij}^f$ variables are calculated at each node.
 
 The vector $\textbf{s}$ is initialized to all ones. The calculation proceeds cyclically.  The node which has the flow corresponding to the first component of the vector $\textbf{s}$ will do the update
 \begin{align}
 \textbf{s}(1)=\textbf{s}(1)+\alpha w(1)\mu(1).\label{nodeWiseUpdates}
 \end{align}
Here $w(1)=\theta^{f(1)} x_{i(1)}^{f(1)}$, with $\theta=1$ if the QoS constraint of flow $f(1)$ was satisfied in the previous review cycle; otherwise, it is set to be equal to a  value $\hat{\theta}$. The node then calculates the inner products
\begin{align*}
\beta_1^*\triangleq\langle \textbf{s},\boldsymbol{\nu}^1\rangle, \beta_2^*\triangleq\langle \textbf{s},\boldsymbol{\nu}^2\rangle
\end{align*} 
where $\boldsymbol{\nu}^l, \boldsymbol{\nu}^2$, correspond to the two interference constraints that the update step may break. If one of these constraints is broken, the update can be projected back in a single step. If both are violated, we will have to go for the iterative projection method. For projection on a plane characterized by $\langle \textbf{s},\boldsymbol{\nu}^i\rangle=\beta_i$, the node calculates $\beta_{ex}=\frac{\beta_i^*-\beta_i}{N_v}$ where $N_v$ is the number of links in that interference set. The node communicates this value to all links in its interference set. All these nodes, as well as the current node, update their values as
\begin{align*}
\textbf{s}(k)=\textbf{s}(k)-\beta_{ex}.
\end{align*} 
This is the projection step. Once the required number of projections is over, the node then passes its $s(1)$ to the node which has the next component of the vector $\textbf{s}$, and that node updates its value of $\textbf{s}(1)$. The next node now repeats the update and projection steps, and passes its update to its neighbour. This process is repeated cyclically, i.e, we repeat step (\ref{nodeWiseUpdates}) with 1 replaced by 2, and then by 3 and so on,  across the nodes till a predetermined stopping time is reached. At the end of the stopping time, we set all the negative components of $\textbf{s}$ to zero. For each interference set $I$, we check its constraint 
\begin{align*}
\langle\textbf{s},\boldsymbol{\nu}\rangle \leq \beta.
\end{align*}
If not, we apply the update
\begin{align*}
\textbf{s}(k)=\frac{\textbf{s}(k)}{\langle\textbf{s},\boldsymbol{\nu}\rangle}, \ k\in I.
\end{align*}
This will ensure compliance with the constraints. The complete algorithm is given below, as Algorithm \ref{DistriAlgo}, which uses in turn, Algorithms \ref{NodeWise}, \ref{NodeProject} and \ref{ScheCreate}. The last algorithm creates the schedule by scheduling flows on a link for a fraction of time equal to the corresponding $\textbf{s}(k)$.
\begin{algorithm}[h]
	\caption{Algorithm Q-Flo}
	\label{DistriAlgo}
	\begin{algorithmic}[1]
		\State $T_{rev}=0$, $T_{prev}=0$.
		
		\While{$t\geq 0$}
		
		\If{$t=T_{rev}$} 
		\State obtain variables $s_{ij}^f(T_{rev})$ using Algorithm \ref{NodeWise}

		\State $T_{prev}\gets T_{rev}$
		\State $T_{rev}\gets T_{rev}+a_1\log (1+a_2\sum_{i,f}Q_i^f(T_{rev}))$
		\State Create  $sched(i,j,f,t)$ from $t=T_{prev}$ to $t=T_{rev}-1$ using	Algorithm \ref{ScheCreate}	
		\EndIf
		\For{all $i,j,f$}
		\If{$Q_i^f(t)>\bar{q}_i^f$ and $sched(i,j,f,t)=1$} \ schedule flow $f$ across link $(i,j)$,  $\bar{q}_i^f=safety\ stock$
		\EndIf
		\EndFor
		
		\EndWhile
		
	\end{algorithmic}
	
\end{algorithm}

\begin{algorithm}[h]
	\caption{Algorithm at node level}
	\label{NodeWise}
	\begin{algorithmic}[1]
		\State Stopping time $T_s$, $t^{'}=0$, $s_{ij}^f(T_{rev})=0$ for all $i,j,f$
		\While{$t^{'}<T_s$}
		\State $k=t^{'}\Mod{|\mathcal{K}|}+1$, $(i,j,f)\gets (i(k),j(k),f(k))$
		\State If QoS criterion of  $f$ satisfied, $\theta^f\gets 2$; else $\theta^f\gets 1$
		\State $w\gets\theta^{f}Q_i^f(T_{rev})$, $\mu(k)\gets \mu_{ij}$, $s_{ij}^f\gets s_{ij}^f+\alpha w \mu(k)$
		\State Project $s_{ij}^f\gets \Pi_{\mathcal{S}}({s_{ij}^f})$ using Algorithm \ref{NodeProject}		
		
		\State $t^{'} \gets t^{'}+1$
		\EndWhile
		\State $s_{ij}^f\gets \max (s_{ij}^f,0)$
		\State If $s:=\sum_{j,f} s_{ij}^f+\sum_{j,f} s_{ji}^f>1$, $s_{ij}^f\gets \frac{s_{ij}^f}{s}$
		\State $s_{ij}^f(T_{rev})\gets s_{ij}^f$
		
	\end{algorithmic}

\end{algorithm}

\begin{algorithm}[h]
	\caption{Algorithm for Projection}
	\label{NodeProject}
	\begin{algorithmic}[1]
		\State Link interference constraints $\langle \textbf{s},\boldsymbol{\nu}^1\rangle\leq \beta_1, \langle \textbf{s},\boldsymbol{\nu}^2\rangle\leq \beta_2$
		\State Calculate  $\beta_1^*\triangleq\langle \textbf{s},\boldsymbol{\nu}^1\rangle, \beta_2^*\triangleq\langle \textbf{s},\boldsymbol{\nu}^2\rangle$
		\If{$\beta_i^*>\beta_i$} and {$\beta_j^*<\beta_j$}
		\State  $\beta_{ex}=\frac{\beta_i^*-\beta_i}{N_i+1}$, $N_i=$ number of interferers.
		\State For all interferers and current link, update $\beta_{ij}^f-\beta_{ex}$.
		\EndIf
		\If {$\beta_1^*>\beta_1$ and $\beta_2^*>\beta_2$}
		\State Repeat steps 4 to 6 and  8 to 11 $N\_rep$ times
		\EndIf
	\end{algorithmic}
	
\end{algorithm}

\begin{algorithm}[h]
	\caption{Algorithm for Schedule Creation}
	\label{ScheCreate}
	\begin{algorithmic}[1]
		\State Initialize $sched(i,j,f,t)=0\ \forall i,j,f,t$
		\For{$k \in \{1,\dots,|V|\}$}
		\State Obtain $sched(i,j,f,t)$ for $i\leq k-1$
		\State Obtain $s_{kj}^f(T_{rev})$ for all $j,f$
		\State Set of links that interfere with node $k =: N_k$
		\For{$j \in N_k, f\in F, t\in[T_{prev},T_{rev}]$}
		\If {$\sum_{i\leq k-1}sched(i,j,f,t)=0$ and $\sum_{i\in N_j}sched(j,i,f,t)=0$ and $\sum_{t^o=T_{prev}}^{t}sched(k,j,f,t^o)<s_{kj}^f(T_{rev}-T_{prev})$}
		\State $sched(k,j,f,t)=1$
		\EndIf
		\EndFor		
		\EndFor
	\end{algorithmic}					
\end{algorithm}
\section{Simulation Results}
We consider a 10 node network, with connectivity as depicted in Fig. \ref{figsim}, on a unit area, and Rayleigh distributed channel gains with parameter proportional to the inverse of the square of the distance between the nodes. The source-destination pairs are from node 0 to node 9, node 1 to node 7, node 5 to node 7, node 2 to node 8 and node 4 to node 9 with fixed routes being $0\to1\to3\to7\to9$, $0\to4\to9$ and  $0\to2\to6\to8\to9$ for the first flow, $1\to3\to7$ for the second, $5\to7$ for the third, $2\to6\to8$ for the fourth and $4\to9$ for the last. A packet is of size one bit. Nodes transmit with unit power.
\begin{figure}
	\begin{subfigure}[b]{0.24\textwidth}
		\centering
		\resizebox{\linewidth}{!}{
			\begin{tikzpicture}
			\node[draw,shape=circle, fill={rgb:orange,1;yellow,2;pink,5}, scale=0.6, transform shape] (v1) at (2,0) {$9$};
			\node[draw,shape=circle, fill={rgb:orange,1;yellow,2;pink,5}, scale=0.6, transform shape] (v2) at (4,-0.2) {$8$};
			\node[draw,shape=circle, fill={rgb:orange,1;yellow,2;pink,5}, scale=0.6, transform shape] (v3) at (3.8,0.7) {$6$};
			\node[draw,shape=circle, fill={rgb:orange,1;yellow,2;pink,5}, scale=0.6, transform shape] (v4) at (1,0.4) {$7$};
			\node[draw,shape=circle, fill={rgb:orange,1;yellow,2;pink,5}, scale=0.6, transform shape] (v5) at (2,1.4) {$3$};
			\node[draw,shape=circle, fill={rgb:orange,1;yellow,2;pink,5}, scale=0.6, transform shape] (v6) at (0,1.4) {$5$};
			\node[draw,shape=circle, fill={rgb:orange,1;yellow,2;pink,5}, scale=0.6, transform shape] (v7) at (1,2.4) {$1$};
			\node[draw,shape=circle, fill={rgb:orange,1;yellow,2;pink,5}, scale=0.6, transform shape] (v8) at (2,3.4) {$0$};
			\node[draw,shape=circle, fill={rgb:orange,1;yellow,2;pink,5}, scale=0.6, transform shape] (v9) at (3,1.4) {$4$};
			\node[draw,shape=circle, fill={rgb:orange,1;yellow,2;pink,5}, scale=0.6, transform shape] (v10) at (4,2.4) {$2$};
			\draw[line width=0.2mm] (v2) -- (v1)
			(v9) -- (v1)
			(v2) -- (v3)
			(v4) -- (v5)
			(v4) -- (v6)
			(v3) -- (v10)
			(v8) -- (v10)
			(v8) -- (v9)
			(v5) -- (v7)
			(v8) -- (v7)
			(v9) -- (v10)
			(v4) -- (v1);
			\end{tikzpicture}
		}
		\caption{Sample Network}
		\label{figsim}
	\end{subfigure}
	\begin{subfigure}[b]{0.24\textwidth}
		\centering
		\resizebox{\linewidth}{!}{
			\begin{tikzpicture}
			\begin{axis}[
			xlabel=Iterations,
			ylabel=Mean Delay(slots)]
			\addplot  coordinates {
				(1,427)
				(2,66)
				(3,52)
				(4,36)
				(5,28)
				(7,25)
				(10,23)
				(12,24)
				(15,49)
				(20,42)
			};
			
			\addplot coordinates {
				(1,563)
				(2,67)
				(3,38)
				(4,28)
				(5,22)
				(7,16)
				(10,13)
				(12,11)
				(15,26)
				(20,19)
			};
			
			\addplot coordinates {
				(1,104)
				(2,88)
				(3,64)
				(4,44)
				(5,33)
				(7,29)
				(10,27)
				(12,27)
				(15,35)
				(20,31)
			};
			
			\legend{Flow 7,Flow 8, Flow 9}
			\end{axis}
			\end{tikzpicture}
		}
		\caption{Number of Iterations versus Mean Delay}   
		\label{figIterVar}
	\end{subfigure}
	\caption{} 
	\label{fig:twoPics}
\end{figure}
We will first study the performance of the algorithm with the number of iterations of the distributed algorithm as parameter. We fix $\alpha=0.0001$, and the arrival process is Poisson with rate 3.3 corresponding to the flows from nodes 0 to 9, 1 to 7, 2 to 8, 4 to 9, and 5 to 7 respectively. The safety stock value is set to be 5 for all queues, and the simulation runs for $10^5$ slots. The constants $a_1$ and $a_2$ in Algorithm \ref{DistriAlgo} are set to 1.
	
In Fig. \ref{figIterVar} we plot the variation of mean delay of three flows in the network while we vary the number of iterations of the distributed algorithm. One iteration is equivalent to the completion of the update and project step at all the nodes. Since mean delay is directly proportional to mean queue length, it is evident that as the number of distributed iterations increases, the system has a lower mean queue length. From the simulations, around 5 rounds of iterations seem to be sufficient, and there is no major improvement in mean delay after that. There is, however, a marginal increase in the delay when the iterations increase further, to around 15. This is probably owing to the error accumulation as a result of the finite truncation of the iterative steps. Another parameter of interest is the number of rounds of iterative projection, $N\_{rep}$. From simulations, it seems that 2 to 4 rounds are sufficient.

We consider the case where we are trying to provide end-to-end mean delay guarantees to two flows: those destined to nodes 7 and 8 (Table \ref{table:One}). The arrival rate is 3.3 packets/slot for all arrivals. We study two cases, with $\hat{\theta}$ equal to 6 and 7. Using a higher weight, we are able to give tighter delay guarantees. However, a lower weight puts less strain on the other flows. Also, we see that as the delay constraint becomes tighter, the delay of the non QoS flow decreases. This is because while a given priority weight $\theta^f$ reserves resources for a QoS flow, if the delay required is smaller, the flow will have a smaller mean queue length, which will result in higher weight being given to non QoS flows in review periods where the delay criterion is satisfied, since the optimization function (\ref{optFun}) is  proportional to the queue length. Here $T_s=8$ and $N\_ rep=10$.\\
\begin{table}[h]
	\centering
	\caption{Two Flows with mean delay requirement}
	\label{table:One}
	\begin{tabular}{|p{0.68cm}|p{0.65cm}|p{0.65cm}|p{0.68cm}|p{0.65cm}|p{0.65cm}|p{0.62cm}|p{0.62cm}|}
		\hline
		\multicolumn{8}{|c|}{Mean Delay(slots)}\\		
		\hline 
		\multicolumn{3}{|c}{Flow 7}\ & \multicolumn{3}{|c}{Flow 8} & \multicolumn{2}{|c|}{Flow 9}\\
		\hline		
		\multicolumn{1}{|c}{Target}\ & \multicolumn{2}{|c}{Achieved}\ & \multicolumn{1}{|c}{Target} & \multicolumn{2}{|c}{Achieved} & \multicolumn{1}{|c|}{ Delay} & \multicolumn{1}{|c|}{ Delay}\\
		Mean Delay & with $\hat{\theta}=6$ & with $\hat{\theta}=7$ & Mean Delay & with $\hat{\theta}=6$ & with $\hat{\theta}=7$ & with $\hat{\theta}=6$ & with $\hat{\theta}=7$\\
		\hline
		50 & 51 & 51 & 30 & 32 &33 & 318 & 275 \\
		\hline
		40 & 40 & 40 & 25 & 26 & 28 & 253 & 196 \\
		\hline
		30 & 32 & 30 & 20 & 22 & 21 & 172 & 165 \\
		\hline
		25 & 30  & 26 & 15 & 18 & 15 & 145 & 147 \\
		\hline
	\end{tabular}
\end{table}
In Table \ref{table:Two}, we  demonstrate how to provide hard delay guarantee for flow 7 and mean delay guarantee for flow 8. The weights $\hat{\theta}$ for  flows 7 and 8 are 2 and 1.5. For flow 7, the packet is dropped at the destination if its deadline is not met. We have set a target of $2\%$ for such packets. We see that packets of flow 7 meet this target for the different deadlines fixed. The mean delay requirements of flow 8 are also met.
\begin{table}[h]
	\centering
	\caption{One mean delay and one hard deadline}
	\label{table:Two}
	\begin{tabular}{|p{1.4cm}|p{1.2cm}|p{1.2cm}|p{1.2cm}|p{1.2cm}|}
		\hline 
		\multicolumn{2}{|c}{Flow 7}\ & \multicolumn{2}{|c|}{Flow 8} & Flow 9\\
		\hline
		Hard Delay Target(slots), Drop Ratio Target & Drop Ratio Achieved & Mean Delay Target (slots) & Mean Delay Achieved (slots) & Mean Delay (slots)\\
		\hline
		180,2\% & 2\% & 50 & 51 & 136 \\
		\hline
		180,2\% & 2\% & 40 & 43 & 100 \\
		\hline
		180,2\% & 2\% & 35 & 36 & 89 \\
		\hline
		160,2\% & 2\% & 45 & 45 &  88 \\
		\hline
		140,2\% & 2\% & 30 & 33 & 91 \\
		\hline
		120,2\% & 2\% & 35 & 37 &  94 \\
		\hline
	\end{tabular}
\end{table}
\section{Conclusion}
We have developed a distributed algorithm to provide Quality-of-Service requirements in terms of end-to-end mean delay guarantees and hard deadline guarantees to flows in a multihop wireless network. The algorithm optimizes, in a distributed fashion, a function with distributed weights given to pseudo draining times, with the weights  varied dynamically  to provide priority for flows in the network, and consequently, meeting their respective delay constraints. We use iterative gradient ascent and distributed iterative projection methods in order to compute the optimal point in a distributed manner. By means of simulations we establish the efficacy of the algorithm in providing the required delay demands. We also see via simulations that the algorithm converges quickly.
\bibliography{survey}
\bibliographystyle{IEEEtran}

\end{document}